\providecommand{\U}[1]{\protect\rule{.1in}{.1in}}
\newtheorem{theorem}{Theorem}
\newtheorem{corollary}{Corollary}
\newtheorem{remark}{Remark}
\newenvironment{proof}[1][Proof]{\textbf{#1.} }{\  \rule{0.5em}{0.5em}}
\def \@removefromreset#1#2{\let \@tempb \@elt
\def \@tempa#1{@&#1}\expandafter \let \csname @*#1*\endcsname \@tempa
\def \@elt##1{\expandafter \ifx \csname @*##1*\endcsname \@tempa \else
\noexpand \@elt{##1}\fi}     \expandafter \edef \csname cl@#2\endcsname{\csname cl@#2\endcsname}     \let \@elt \@tempb
\expandafter \let \csname @*#1*\endcsname \@undefined}
\begin{document}

\title{Quantifying Bell nonlocality of a pure two-qudit state

via its entanglement}

\author{Elena R. Loubenets$^{1}$, Sergey Kuznetsov$^{2}$ and Louis Hanotel$^{1}$ \\
\\
$^{1}$National Research University Higher School of Economics, \\Moscow 101000, Russia\\$^{2}$Steklov Mathematical Institute of Russian Academy of Sciences, \\Moscow 119991, Russia}

\maketitle

\begin{abstract}
For the maximal violation of  all Bell inequalities by an arbitrary  pure two-qudit state of any dimension, we derive a new lower bound  expressed via the  concurrence of this pure state. This new lower bound and the upper bound on the maximal Bell violation, found in [J. Phys. A: Math. Theor. 55,
285301 (2022)] and also expressed via the concurrence,  analytically quantify Bell nonlocality of a pure two-qudit state via its entanglement, in particular,  prove explicitly that entanglement of a pure two-qudit state is necessary and sufficient for its Bell nonlocality. By re-visiting the pure two-qubit case, we also find and rigorously prove the new results on correlation properties of an arbitrary  pure two-qubit state.
\end{abstract}

\section{Introduction}

Bell
nonlocality \cite{01,02} of a multipartite quantum state -- in the sense of its violation of a Bell
inequality\footnote{For the general framework on Bell inequalities, either on correlation functions or on joint probabilities or of a more general type, see in Ref. 
\cite{04}.} -- has been analyzed by many authors, for details see articles 
\cite{05, 06, 07} and references therein.

 It is well known that every separable quantum state admits a local hidden variable (LHV) model and does not, therefore,  violate any  of 
 Bell inequalities \cite{04}. This implies that every  nonlocal quantum state is entangled. The converse is not, in general, true and, in 1989, Werner \cite{08} presented
mixed entangled two-qudit states, each admitting an LHV model under all projective local measurements with any numbers of settings at each of two  sites. These entangled mixed states do not violate any of bipartite 
Bell inequalities and are,  therefore, fully Bell local in terminology of Ref. \cite{07}. 

However, as proved further in 1991 by Gisin \cite{09} for a
pure two-qubit case and in 1992 by Gisin and Peres \cite{010}\  for 
 a general
two-qudit case, every pure entangled two-qudit state is Bell nonlocal. In both these articles,
the proofs of Bell nonlocality of a pure entangled two-qudit state $\rho
_{\psi}=|\psi\rangle\langle\psi|$ are based on constructing for a considered state of some 
specific qudit observables $A_{i},B_{j},$ $i,j=1,2$, 
for which the Clauser--Horne--Shimony--Holt (CHSH) inequality \cite{011} is
violated:%
\begin{equation}
\left\vert \text{ }\langle A_{1}\otimes B_{1}\rangle_{\rho_{\psi}}+\langle
A_{1}\otimes B_{2}\rangle_{\rho_{\psi}}+\langle A_{2}\otimes B_{1}%
\rangle_{\rho_{\psi}}-\langle A_{2}\otimes B_{2}\rangle_{\rho_{\psi}%
}\right\vert >2. \label{1_}%
\end{equation}


Further, in 1995, for the maximal violation by a general two-qubit state $\rho$ of the CHSH
inequality in case of traceless qubit observables, Horodeckis \cite{012} derived the precise expression 
\begin{align}
 \max_{A_{i},B_{j}} & \left\vert \text{ }\langle A_{1}\otimes B_{1}%
\rangle_{\rho}+\langle A_{1}\otimes B_{2}\rangle_{\rho}+\langle A_{2}\otimes
B_{1}\rangle_{\rho}-\langle A_{2}\otimes B_{2}\rangle_{\rho}\right\vert
\label{2}\\
&  =2M_{chsh}(\rho),\nonumber\\
M_{chsh}(\rho) &  :=\sqrt{u_{1}^{2}(\rho)+u_{2}^{2}(\rho)},\nonumber
\end{align}
where $u_{1}(\rho),u_{2}(\rho)$ are two
largest singular values of the linear
operator $T_{\rho}$ on $\mathbb{R}^{3}$ specified by its matrix
representation in the standard basis in $\mathbb{R}^{3}$:
\begin{equation}
T_{\rho}^{(ij)}:=\mathrm{tr}[\rho(\sigma_{i}\otimes\sigma_{j}%
)]\in\mathbb{R},\text{ \ \ }i,j=1,2,3,\label{3}%
\end{equation}
and referred to as the correlation matrix of a two-qubit state $\rho$. Here, $\sigma_{k}$, $k=1,2,3,$ are
the Pauli operators on $\mathbb{C}^{2}.$ 
Relation (\ref{2}) implies that, in case of traceless qubit observables
at each of two sites, a general two-qubit state $\rho$ violates the CHSH inequality
iff parameter $M_{chsh}(\rho )>1$. 

The results for a two-qubit state in \cite{010,012} and the result for a two-qudit state in \cite{011} do not, however,  establish any quantitative relation between entanglement and nonlocality of a pure two-qudit state which could explicitly imply that entanglement of this state is sufficient for its nonlocality. Moreover, to our knowledge, up to now, such a quantitative relation has not been reported 
 in the literature. 

The aim of the present article is just to find such a quantitative relation between entanglement and nonlocality of a pure two-qudit state of an arbitrary dimension. 


The article is organized as follows. In Section 2, we introduce the measures for entanglement and nonlocality used in the present paper. In Section 3, by re-visiting a two-qubit case, we find and rigorously prove the new results on correlation properties and Bell nonlocality of an arbitrary  pure two-qubit state. In Section 4, for an arbitrary  pure two-qudit, we find a new lower bound on its maximal  violation of all Bell inequalities which is expressed via the concurrence of this state. This new lower bound  and  the upper bound on the maximal  Bell violation, derived recently in \cite{019},  analytically quantify Bell nonlocality of a pure two-qudit state via its entanglement, in particular,  prove explicitly  that every entangled two-qudit state is Bell nonlocal. In Section 5, we summarize the main results of this article. 

\section{Preliminaries} 

In this Section we introduce the measures for entanglement and nonlocality used in the present paper.

Recall that entanglement of a bipartite state $\rho$ can be quantified
via bipartite entanglement measures, in particular, via negativity and concurrence. For a pure two-qudit state $\rho_{\psi
}=|\psi\rangle\langle\psi|$ on $\mathbb{C}^{d_{1}}\mathbb{\otimes C}^{d_{2}}$, its 
negativity $\mathcal{N}_{\rho_{\psi}}$ and the unnormalized concurrence
$\mathrm{C}_{\rho_{\psi}}$ satisfy the relation (see relation (38) in \cite{019})%
\begin{equation}
0\leq2\sqrt{2}\text{ }\mathcal{N}_{\rho_{\psi}}\leq r_{sch}^{(\psi)}%
(r_{sch}^{(\psi)}-1)\mathrm{C}_{\rho_{\psi}}, \label{4__}%
\end{equation}
where $1\leq r_{sch}^{(\psi)}\leq d=\min\{d_{1},d_{2}\}$ is the Schmidt rank
of a pure state $\rho_{\psi
}=|\psi\rangle\langle\psi|$ and the unnormalized concurrence 
\cite{014,014_}%
\begin{align}
\mathrm{C}_{\rho_{\psi}}  &  :=\sqrt{2\left(  1-\text{\textrm{tr}}[\tau
_{j}^{2}(\rho_{\psi})]\right)  },\label{5_}\\
0  &  \leq\mathrm{C}_{\rho_{\psi}}\leq\sqrt{\frac{2(d-1)}{d}},\nonumber
\end{align}
can be \emph{easily calculated}. A pure two-qudit state is entangled iff $\mathrm{C}_{\rho_{\psi}}>0$. Here, $\tau_{j}(\rho_{\psi}),\ j=1,2,$ are
states on $\mathbb{C}^{d_{1}}$ and $\mathbb{C}^{d_{2}},$ respectively, reduced from state
$\rho_{\psi}$ on $\mathbb{C}^{d_{1}}\otimes\mathbb{ C}^{d_{2}},$  and by the Schmidt theorem \textrm{tr}$[\tau_{1}^{2}(\rho_{\psi
})]=$ \textrm{tr}$[\tau_{2}^{2}(\rho_{\psi})]$. 

For a general two-qudit state $\rho$  concurrence $\mathrm{C}_{\rho}$ is
given by the infimum \cite{015,016}
\begin{equation}
\mathrm{C}_{\rho}:=\inf_{\{\alpha_{i},\psi_{i}\}} \sum\alpha_{i}\mathrm{C}_{|\psi_{i}
\rangle\langle\psi_{i}|}, \label{6_}%
\end{equation}
over all possible convex decompositions of a state $\rho=\sum\alpha_{i}%
|\psi_{i}\rangle\langle\psi_{i}|,$ $\sum\alpha_{i}=1,$ $\alpha_{i}>0,$ via
pure states.

The situation with quantifying Bell nonlocality is quite different. To our knowledge, among all possible nonlocality measures suggested\footnote{On nonlocality measures different to (\ref{7_}), see articles \cite{nlm5,nlm2,nlm3, nlm4, nlm1} and references
therein.} in the literature, there is only one which is \emph{consistent with
the above definition of the Bell nonlocality for any }$N$\emph{-partite state
}$\rho,$\emph{ pure or mixed}. This nonlocality measure, introduced and
discussed in \cite{05, 07,019, 018}, is given by 
\begin{equation}
1\leq \Upsilon_{\rho}:=\sup_{\mathfrak{B}\neq 0}\Upsilon_{\mathfrak{B}}^{(\rho
)},\label{7_}%
\end{equation}
where supremum is taken  over all possible Bell inequalities\footnote {Either on correlation functions or on joint probabilities   or of a more general type and   with any number of measurement settings at each of sites.} and  parameter 
$\Upsilon_{\mathfrak{B}}^{(\rho)}$ constitutes the maximal violation   
of the Bell inequality,  specified by a Bell functional  $\mathfrak{B(\cdot)}$, by an $N$-partite state $\rho$ and  is defined via maximizing  over all possible settings and outcomes the ratio of the absolute value of this functional on a state $\rho$ to its maximal absolute value within the LHV model.
For example, for the  Clauser--Horne inequality \cite{CH} the latter LHV value is equal to 1. For details, see the general formalism developed in \cite{05} (for its  short version, see  Section 2 in  \cite{019}). 

\emph{Definition (\ref{7_}) implies  that an $N$-partite quantum state $\rho$ is
fully Bell local, that is, does not violate any Bell inequality, iff $\Upsilon_{\rho}=1$ and nonlocal iff $\Upsilon_{\rho}>1.$}

Clearly, any separable quantum state $\rho_{sep}$ is fully Bell local and,
by the analytical upper bound  (53) on $\Upsilon_{\rho}$ derived in
\cite{05} and expressed via the specific extensions of a state $\rho$, there exist
entangled $N$-partite quantum states $\rho$ that are fully Bell local.
Based on this analytical upper bound on $\Upsilon_{\rho}\geq1$ valid for any
$N$-partite state $\rho$ , there were further \cite{05,019,018} specified several upper bounds on the maximal Bell violation $\Upsilon_{\rho}\geq1$ in terms of the Hilbert space characteristics of this state. In particular, the
upper bound \cite{019} in terms of the concurrence: %
\begin{equation}
\Upsilon_{\rho}\leq1+\sqrt{2d(d-1)}\text{ }\mathrm{C}_{\rho},\text{ \ \ }d=\min
\{d_{1},d_{2}\},\label{8__}%
\end{equation}
which is valid for every two-qudit state $\rho$  on $\mathbb{C}^{d_{1}}\otimes\mathbb{
C}^{d_{2}}$, pure or mixed, and also, the upper bound \cite{019}
\begin{equation}
\Upsilon_{\rho_{\psi}}\leq1+\sqrt{2r_{sch}^{(\psi)}(r_{sch}^{(\psi)}-1)}\text{
}\mathrm{C}_{\rho_{\psi}},\label{9__}%
\end{equation}
true for every pure two-qudit  state $\rho_{\psi}$ with the Schmidt rank   $1\leq r_{sch}^{(\psi)}\leq
d=\min\{d_{1},d_{2}\}$. 
Since $ \mathrm{C}_{\rho}\leq\sqrt{\frac{2(d-1)}{d}},$ the upper bound (\ref{8__}) is
tighter than the upper bound%
\begin{equation}
\Upsilon_{\rho}\leq2\min\{d_{1},d_{2}\}-1\label{10_},
\end{equation}
derived earlier in \cite{05}. 

The upper bound (\ref{8__}) explicitly implies that
entanglement of a general two-qudit state is necessary for its Bell nonlocality:
$\Upsilon_{\rho}>1$ $\Rightarrow$ $\mathrm{C}_{\rho}>0$, and is important for the entanglement certification and quantification \cite{Moroder2013,Liang2011,app}.

\section{Two-qubit case re-visited}

Different aspects of correlation properties of a pure two-qubit state have been discussed by many authors, see, for example, articles \cite{020,Batle,Gamel} and references therein. 

In particular, in \cite{020}, the authors argue to prove that, for any pure state $\rho_{\psi}=|\psi\rangle\langle\psi|$ on $\mathbb{C}^{2}\otimes\mathbb{
C}^{2},$ the state parameter $M_{chsh}(\rho_{\psi})$ in (\ref{2}) has the form
\begin{equation}
M_{chsh}(\rho_{\psi})=\sqrt{1+\mathrm{C}_{\rho_{\psi}}^{2}}\ ,\label{Mchsh}%
\end{equation}
where $\mathrm{C}_{\rho_{\psi}}$ is concurrence (\ref{5_}) of a pure two-qubit state $\rho_{\psi}$.   Though this equality was further referred to and used in many other papers, the
proof of this equality in \cite{020} is given for a pure two-qubit state which has a diagonalizable correlation matrix (\ref{3}) and admits the Schmidt decomposition of some special form. 

This is definitely not the case for every pure two-qubit state, for details see below our Remark 1 and our discussion after Corollary 1.

In this Section, we derive the new results on correlation properties of a  pure two-qubit state, implying, in particular, the rigorous proof of relation (\ref{Mchsh}) for an arbitrary two-qubit state.  
To our knowledge, these our new results on correlation properties of a pure two-qubit state  have not been reported in the literature.

For a pure two-qubit state $\rho_{\psi}=|\psi\rangle\langle\psi|,$
 consider its Pauli representation
\begin{align}
&\rho_{\psi}    =\frac{1}{4}\left[  \mathbb{I}\otimes\mathbb{I}+\left( \vec{r}_{1}(\rho_{\psi})\cdot\vec{\sigma}\right)
\otimes\mathbb{I}+\mathbb{I}\otimes\left(\vec{r}_{2}%
(\rho_{\psi})\cdot\vec{\sigma}\right)  +\sum\limits_{i,j=1}%
^{3}T_{\rho_{\psi}}^{(ij)}(\sigma_{i}\otimes\sigma_{j})\right]  ,\label{5}\\
&\vec{r}\cdot\vec{\sigma}    :=\sum\limits_{i=1}%
^{3}r^{(i)}\sigma_{i},\     \vec{\sigma}:=(\sigma
_{1},\sigma_{2},\sigma_{3}),\nonumber
\end{align}
where $\sigma_{k},$ $k=1,2,3,$ are the Pauli operators, $T_{\rho_{\psi}}$ is
the correlation matrix (\ref{3}) and
\begin{align}
\vec{r}_{1}(\rho_{\psi})  &  :=\mathrm{tr}\,[\rho_{\psi
}(\vec{\sigma}\otimes\mathbb{I})]=\mathrm{tr}\,[\tau_{1}%
(\rho_{\psi})\vec{\sigma}\mathbb{]},\text{ \ \ \ }%
\vec{r}_{2}(\rho_{\psi}):=\mathrm{tr}\,\mathbb{[}\rho_{\psi
}(\mathbb{I\otimes}\vec{\sigma}\mathbb{)]=}\text{ }\mathrm{tr}\,
[\tau_{2}(\rho_{\psi})\vec{\sigma}\mathbb{]}\text{,}\label{6}\\
\vec{r}_{j}(\rho_{\psi})  &  \in\mathbb{R}^{3},\text{
\ \ }j=1,2,\nonumber
\end{align}
are the Bloch vectors of the qubit states%
\begin{equation}
\label{7}
\tau_{j}(\rho_{\psi})=\frac{\mathbb{I}+\vec{r}_{j}(\rho_{\psi
})\cdot\vec{\sigma}}{2},\text{ \ \ \ }\mathrm{tr}[\tau_{j}^{2}%
(\rho_{\psi})]=\frac{1}{2}\left(  1+\left\Vert \vec{r}_{j}%
(\rho_{\psi})\right\Vert _{\mathbb{R}^{3}}^{2}\right)  ,\text{ \ \ \ }j=1,2,
\end{equation}
reduced from a state $\rho_{\psi}$ on $\mathbb{C}^{2}\otimes\mathbb{C}^{2}.$
By the Schmidt theorem $\left\Vert \vec{r}_{1}(\rho_{\psi
})\right\Vert _{\mathbb{R}^{3}}=\left\Vert \vec{r}_{2}(\rho_{\psi
})\right\Vert _{\mathbb{R}^{3}}.$

\begin{remark}

In Ref. \cite{020}, the authors claim that, for any pure two-qubit state, the correlation matrix (\ref{3}) is 
diagonalizable. However, this claim is mistaken. As an example, let us take the separable pure state $\rho_{\psi_{sep}}$ with $|\psi_{sep}\rangle
=\frac{1}{\sqrt{2}}|0\rangle\otimes\left(  \text{ }|0\rangle+i|1\rangle
\right)  $. For this pure two-qubit state,  the Bloch vectors in (\ref{5}) are equal to  $\vec{r}_{1}=(0,0,1)$ and $\vec{r}%
_{2}=(0,1,0)$ and 
the correlation
matrix, having only one nonzero element $T_{\rho_{\psi_{sep}}%
}^{(32)}=1$, has
eigenvalue $0$ with the algebraic multiplicity $3$ and
the geometric multiplicity $2$. The latter implies that, for this pure two-qubit
state, the correlation matrix $T_{\rho_{\psi_{sep}}}$ is non-diagonalizable. 
\end{remark}

The Pauli representation (\ref{5}) and the purity relation $\rho_{\psi}%
^{2}=\rho_{\psi}$ imply
\begin{align}
3  &  =\left\Vert \vec{r}_{1}\right\Vert _{\mathbb{R}^{3}}%
^{2}\ +\ \left\Vert \vec{r}_{2}\right\Vert _{\mathbb{R}^{3}}%
^{2}\ +\ \mathrm{tr[}T_{\rho_{\psi}}^{\dagger}T_{\rho_{\psi}}],\label{8}\\
\vec{r}_{1}  &  =T_{\rho_{\psi}}\vec{r}_{2},\text{
\ \ \ }\vec{r}_{2}=T_{\rho_{\psi}}^{\dagger}\vec{r}
_{1},\label{9}\\
T_{\rho_{\psi}}^{(ij)}  &  ={r}_{1}^{(i)}r_{2}^{(j)}-\frac
{1}{2}\sum\limits_{k,m,n,p}T_{\rho_{\psi}}^{(km)}T_{\rho_{\psi}}%
^{(np)}\varepsilon_{kni}\varepsilon_{mpj}. \label{10}%
\end{align}
By relation (\ref{3}), for a separable pure two-qubit state, the correlation matrix has elements $T_{\rho_{\psi}}^{(ij)}   = {r}_{1}^{(i)}r_{2}^{(j)}$, for which the second term in (\ref{10}) is equal to zero for any $i,j=1,2,3$.

By (\ref{7}), the concurrence (\ref{5_}) of a pure two-qubit state $\rho_{\psi}$
takes the form\footnote{This form is also true for the normalized concurrence
of any pure bipartite quantum state on $\mathbb{C}^{d_{1}}\otimes
\mathbb{C}^{d_{2}},$ see Theorem 5 in \cite{21}.}%
\begin{equation}
\mathrm{C}_{\rho_{\psi}}=\sqrt{1-\gamma_{\rho_{\psi}}^{2}},\text{
\ \ \ }\gamma_{\rho_{\psi}}=\left\Vert \vec{r}_{1}(\rho_{\psi
})\right\Vert _{\mathbb{R}^{3}}=\left\Vert \vec{r}_{2}(\rho_{\psi
})\right\Vert _{\mathbb{R}^{3}}. \label{11}%
\end{equation}
This and relation (\ref{8}) imply%
\begin{equation}
\mathrm{tr[}T_{\rho_{\psi}}^{\dagger}T_{\rho_{\psi}}]=1+2\mathrm{C}%
_{\rho_{\psi}}^{2}. \label{12}%
\end{equation}
Taking into account relations (\ref{8})--(\ref{12}),  we find the following new result.

\begin{theorem}
For any pure two-qubit state $\rho_{\psi}=|\psi\rangle\langle\psi|,$
$|\psi\rangle\in\mathbb{C}^{2}\otimes\mathbb{C}^{2},$ the determinant of the correlation matrix
$T_{\rho_{\psi}}$ is non-positive  and is expressed via its concurrence as
\begin{equation}
\det T_{\rho_{\psi}}=-\mathrm{C}_{\rho_{\psi}}^{2}. \label{13}%
\end{equation}

\end{theorem}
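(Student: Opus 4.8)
The plan is to extract $\det T_{\rho_{\psi}}$ directly from the structural identity (\ref{10}) by recognizing its second term as the cofactor matrix of $T_{\rho_{\psi}}$, and then contracting the whole identity against $T_{\rho_{\psi}}$ itself. First I would exploit the cyclic invariance of the Levi-Civita symbol ($\varepsilon_{kni}=\varepsilon_{ikn}$, $\varepsilon_{mpj}=\varepsilon_{jmp}$) to rewrite the subtracted term in (\ref{10}) as
\[
\tfrac{1}{2}\sum_{k,m,n,p}\varepsilon_{ikn}\varepsilon_{jmp}\,T_{\rho_{\psi}}^{(km)}T_{\rho_{\psi}}^{(np)}=\mathrm{cof}(T_{\rho_{\psi}})^{(ij)},
\]
which is precisely the standard component formula for the $(i,j)$ cofactor of a $3\times 3$ matrix (matching $a=k,\,b=n,\,c=m,\,d=p$ against $\mathrm{cof}(T)^{(ij)}=\tfrac12\sum\varepsilon_{iab}\varepsilon_{jcd}T^{(ac)}T^{(bd)}$). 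Thus (\ref{10}) collapses to the clean relation $T_{\rho_{\psi}}^{(ij)}=r_{1}^{(i)}r_{2}^{(j)}-\mathrm{cof}(T_{\rho_{\psi}})^{(ij)}$.

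Next I would take the Frobenius (component-wise) inner product of both sides with $T_{\rho_{\psi}}$, i.e.\ multiply by $T_{\rho_{\psi}}^{(ij)}$ and sum over $i,j$. Since the entries of $T_{\rho_{\psi}}$ are real by (\ref{3}), the left-hand side is $\mathrm{tr}[T_{\rho_{\psi}}^{\dagger}T_{\rho_{\psi}}]$, which by (\ref{12}) equals $1+2\mathrm{C}_{\rho_{\psi}}^{2}$. The first term on the right becomes $\sum_{ij}T_{\rho_{\psi}}^{(ij)}r_{1}^{(i)}r_{2}^{(j)}=\vec{r}_{1}\cdot(T_{\rho_{\psi}}\vec{r}_{2})=\vec{r}_{1}\cdot\vec{r}_{1}=\gamma_{\rho_{\psi}}^{2}$, using the first relation in (\ref{9}) together with the definition of $\gamma_{\rho_{\psi}}$ in (\ref{11}). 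The second term contracts the matrix against its own cofactors; by the Laplace expansion summed over all three rows (equivalently, by taking the trace of $T_{\rho_{\psi}}\,\mathrm{adj}(T_{\rho_{\psi}})=(\det T_{\rho_{\psi}})\mathbb{I}$), this yields $\sum_{ij}T_{\rho_{\psi}}^{(ij)}\mathrm{cof}(T_{\rho_{\psi}})^{(ij)}=3\det T_{\rho_{\psi}}$.

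Combining the three contributions gives $1+2\mathrm{C}_{\rho_{\psi}}^{2}=\gamma_{\rho_{\psi}}^{2}-3\det T_{\rho_{\psi}}$. Substituting $\gamma_{\rho_{\psi}}^{2}=1-\mathrm{C}_{\rho_{\psi}}^{2}$ from (\ref{11}) and solving yields $\det T_{\rho_{\psi}}=-\mathrm{C}_{\rho_{\psi}}^{2}$, which is nonpositive because $\mathrm{C}_{\rho_{\psi}}^{2}\geq0$. The main obstacle I anticipate is the bookkeeping in the first step: one must check the index placement and sign carefully so that the $\varepsilon\varepsilon$-term is identified with $+\mathrm{cof}(T_{\rho_{\psi}})^{(ij)}$ and not with its transpose or negative. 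I emphasize that this argument nowhere assumes $T_{\rho_{\psi}}$ to be symmetric or diagonalizable, so it covers exactly the non-diagonalizable cases flagged in Remark \ref{remark}; the cofactor identification and the Laplace expansion are valid for an arbitrary real $3\times3$ matrix.
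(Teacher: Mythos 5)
Your proposal is correct and follows essentially the same route as the paper's own proof: it identifies the $\varepsilon\varepsilon$-contraction in (\ref{10}) with the cofactor matrix (the paper's Eq. (\ref{14})), contracts against $T_{\rho_{\psi}}^{(ij)}$, uses (\ref{9}) and the Laplace expansion to obtain $\mathrm{tr}[T_{\rho_{\psi}}^{\dagger}T_{\rho_{\psi}}]=\Vert\vec{r}_{1}\Vert_{\mathbb{R}^{3}}^{2}-3\det T_{\rho_{\psi}}$, and then combines this with (\ref{11}) and (\ref{12}). The only difference is presentational: you spell out the sign and index bookkeeping of the cofactor identification, which the paper asserts without derivation.
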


\begin{proof}
Taking in (\ref{10}) into account that
\begin{equation}
\sum\limits_{k,m,n,p}T_{\rho_{\psi}}^{(km)}T_{\rho_{\psi}}^{(np)}\varepsilon_{kni}%
\varepsilon_{mpj}=2A_{T_{\rho_{\psi}}}^{(ij)}, \label{14}%
\end{equation}
where $A_{T_{\rho_{\psi}}}^{(ij)}$ is the cofactor of the matrix element
$T_{\rho_{\psi}}^{(ij)}$, we have%
\begin{equation}
T_{\rho_{\psi}}^{(ij)}=r_{1}^{(i)}r_{2}^{(j)}-A_{T_{\rho_{\psi}}}^{(ij)}.
\label{15}%
\end{equation}
Multiplying the left-hand and the right-hand sides of (\ref{15}) by the matrix
element $T_{\rho_{\psi}}^{(ij)}$, summing further over $i,j$ and taking into
account (\ref{9}), we derive%
\begin{equation}
\sum\limits_{i,j=1,2,3}\left( T_{\rho_{\psi}}^{(ij)}\right)  ^{2}=\left\Vert
\vec{r}_{1}\right\Vert ^{2}_{\mathbb{R}^{3}} -\sum\limits_{i,j=1,2,3}A_{T_{\rho
_{\psi}}}^{(ij)}T_{\rho_{\psi}}^{(ij)}, \label{16}%
\end{equation}
where the terms in the left-hand and the right-hand sides constitute
\begin{align}
\sum\limits_{i,j=1,2,3}\left(  T_{\rho_{\psi}}^{(ij)}\right)  ^{2}  &
=\mathrm{tr[}T_{\rho_{\psi}}^{\dagger}T_{\rho_{\psi}}],\label{17}\\
\sum\limits_{j=1,2,3}A_{T_{\rho_{\psi}}}^{(ij)}T_{\rho_{\psi}}^{(ij)}  &
=\det T_{\rho_{\psi}}.\nonumber
\end{align}
This and equality (\ref{16}) imply
\begin{equation}
\mathrm{tr[}T_{\rho_{\psi}}^{\dagger}T_{\rho_{\psi}}]=\left\Vert
\vec{r}_{1}\right\Vert ^{2}_{\mathbb{R}^{3}}-3\det T_{\rho_{\psi}}. \label{18}%
\end{equation}
By (\ref{11}), (\ref{12}) and (\ref{18}):%
\begin{equation}
1+2\mathrm{C}_{\rho_{\psi}}^{2}=1-\mathrm{C}_{\rho_{\psi}}^{2}-3\det
T_{\rho_{\psi}}\text{ \ \ }\Leftrightarrow\text{ \ }\mathrm{C}_{\rho_{\psi}%
}^{2}=-\det T_{\rho_{\psi}}.
\label{19}%
\end{equation} 
This proves the statement.
\end{proof}
\\

The new result of the  following statement explicitly expresses the singular values of the
correlation matrix $T_{\rho_{\psi}}$ of an arbitrary pure two-qubit state $\rho_{\psi}$ and,
correspondingly, the eigenvalues of the positive matrices $T_{\rho_{\psi}
}^{\dagger}T_{\rho_{\psi}}$ and $T_{\rho_{\psi}} T_{\rho_{\psi}}
^{\dagger} $ via the concurrence of this pure state. 

\begin{theorem}
For any pure two-qubit state $\rho_{\psi}=|\psi\rangle\langle\psi|,$
$|\psi\rangle\in\mathbb{C}^{2}\otimes\mathbb{C}^{2},$ the singular values $u_{j}(\rho_{\psi}),$
$j=1,2,3,$ of its correlation matrix $T_{\rho_{\psi}}$ are equal to
\begin{equation}
u_{1}(\rho_{\psi})=1,\text{ \ \ \ }u_{2}(\rho_{\psi})=u_{3}(\rho_{\psi
})=\mathrm{C}_{\rho_{\psi}}, \label{20}%
\end{equation}
where $\mathrm{C}_{\rho_{\psi}}$ is the concurrence of $\rho_{\psi}$
and, in (\ref{5}), the Bloch vectors $\vec{r}_{2}$ and $\vec{r}_{1}$ constitute the eigenvectors of the positive operators $T_{\rho_{\psi}}^{\dagger}T_{\rho_{\psi}}$ and $T_{\rho_{\psi}}T_{\rho_{\psi}}^{\dagger}$,  respectively, corresponding to their eigenvalue $1$. 

\end{theorem}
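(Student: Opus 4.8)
The plan is to obtain the full spectrum of the positive operator $T_{\rho_\psi}^{\dagger}T_{\rho_\psi}$ from three invariants already in hand --- its action on the Bloch vectors, its trace, and its determinant. First I would settle the eigenvector claim directly from (\ref{9}): composing the two relations there gives
\[
T_{\rho_\psi}^{\dagger}T_{\rho_\psi}\,\vec r_2=T_{\rho_\psi}^{\dagger}\vec r_1=\vec r_2,\qquad
T_{\rho_\psi}T_{\rho_\psi}^{\dagger}\,\vec r_1=T_{\rho_\psi}\vec r_2=\vec r_1 .
\]
As long as the state is not maximally entangled, $\gamma_{\rho_\psi}=\|\vec r_j\|_{\mathbb R^{3}}>0$, so $\vec r_2$ and $\vec r_1$ are genuine nonzero eigenvectors of $T_{\rho_\psi}^{\dagger}T_{\rho_\psi}$ and $T_{\rho_\psi}T_{\rho_\psi}^{\dagger}$, respectively, each for the eigenvalue $1$; in particular $1$ is one of the squared singular values.

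Next I would pin down the remaining two squared singular values $\lambda_{2},\lambda_{3}$ by Vieta's relations. Since one eigenvalue of $T_{\rho_\psi}^{\dagger}T_{\rho_\psi}$ equals $1$, the trace identity (\ref{12}) gives $\lambda_{2}+\lambda_{3}=\mathrm{tr}[T_{\rho_\psi}^{\dagger}T_{\rho_\psi}]-1=2\mathrm{C}_{\rho_\psi}^{2}$, while Theorem 1 together with $\det(T_{\rho_\psi}^{\dagger}T_{\rho_\psi})=(\det T_{\rho_\psi})^{2}$ gives $\lambda_{2}\lambda_{3}=\mathrm{C}_{\rho_\psi}^{4}$. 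Thus $\lambda_{2},\lambda_{3}$ are the roots of $t^{2}-2\mathrm{C}_{\rho_\psi}^{2}t+\mathrm{C}_{\rho_\psi}^{4}=(t-\mathrm{C}_{\rho_\psi}^{2})^{2}$, which forces $\lambda_{2}=\lambda_{3}=\mathrm{C}_{\rho_\psi}^{2}$ and hence $u_{2}=u_{3}=\mathrm{C}_{\rho_\psi}$. Since $\mathrm{C}_{\rho_\psi}\le 1$ for a two-qubit state (immediate from (\ref{11}), as $\gamma_{\rho_\psi}\ge 0$), the eigenvalue $1$ is the largest, so $u_{1}=1$.

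The one configuration left uncovered is the maximally entangled state, where $\vec r_1=\vec r_2=0$ and the eigenvector step is vacuous; this is the genuine subtlety. There I would argue separately that (\ref{12}) and Theorem 1 force $\sum_i\lambda_i=3$ and $\prod_i\lambda_i=1$, so AM--GM on three nonnegative numbers gives $\lambda_1=\lambda_2=\lambda_3=1=\mathrm{C}_{\rho_\psi}^{2}$, whence $T_{\rho_\psi}^{\dagger}T_{\rho_\psi}=T_{\rho_\psi}T_{\rho_\psi}^{\dagger}=\mathbb I$ and the eigenvector assertion holds trivially. A uniform alternative that avoids the case split is to compute the second elementary symmetric function of the squared singular values as the squared Frobenius norm of the cofactor matrix, $\sum_{i<j}\lambda_i\lambda_j=\sum_{i,j}\big(A_{T_{\rho_\psi}}^{(ij)}\big)^{2}$, and to evaluate it from (\ref{15}) and (\ref{9}) as $\gamma_{\rho_\psi}^{4}-2\gamma_{\rho_\psi}^{2}+\mathrm{tr}[T_{\rho_\psi}^{\dagger}T_{\rho_\psi}]=\mathrm{C}_{\rho_\psi}^{4}+2\mathrm{C}_{\rho_\psi}^{2}$; the characteristic polynomial of $T_{\rho_\psi}^{\dagger}T_{\rho_\psi}$ then factors once and for all as $(\lambda-1)(\lambda-\mathrm{C}_{\rho_\psi}^{2})^{2}$, yielding the spectrum $\{1,\mathrm{C}_{\rho_\psi}^{2},\mathrm{C}_{\rho_\psi}^{2}\}$ in a single step.
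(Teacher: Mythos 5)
Your proposal is correct, and its core route coincides with the paper's own proof: both extract the eigenvalue $1$ of $T_{\rho_{\psi}}^{\dagger}T_{\rho_{\psi}}$ from the relations (\ref{9}), then combine $\left\vert \det T_{\rho_{\psi}}\right\vert = u_{2}u_{3}=\mathrm{C}_{\rho_{\psi}}^{2}$ (Theorem 1) with the trace identity (\ref{12}) to force $u_{2}=u_{3}=\mathrm{C}_{\rho_{\psi}}$ (your Vieta step $t^{2}-2\mathrm{C}^{2}t+\mathrm{C}^{4}=(t-\mathrm{C}^{2})^{2}$ is the same computation as the paper's $(u_{2}-u_{3})^{2}=0$). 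Where you genuinely go beyond the paper is in noticing that the eigenvector argument is vacuous for the maximally entangled state, where $\vec{r}_{1}=\vec{r}_{2}=0$ and $\vec{r}_{2}=T_{\rho_{\psi}}^{\dagger}T_{\rho_{\psi}}\vec{r}_{2}$ carries no spectral information; the paper's proof silently skips this case, and your AM--GM patch (trace $=3$, determinant $=1$ forces all three eigenvalues to equal $1$) closes that gap cleanly. Your uniform alternative via the second elementary symmetric function --- computing $\sum_{i<j}\lambda_{i}\lambda_{j}=\sum_{i,j}\bigl(A_{T_{\rho_{\psi}}}^{(ij)}\bigr)^{2}=\gamma_{\rho_{\psi}}^{4}-2\gamma_{\rho_{\psi}}^{2}+\mathrm{tr}[T_{\rho_{\psi}}^{\dagger}T_{\rho_{\psi}}]=\mathrm{C}_{\rho_{\psi}}^{4}+2\mathrm{C}_{\rho_{\psi}}^{2}$ from (\ref{15}) and (\ref{9}), so that the characteristic polynomial factors as $(\lambda-1)(\lambda-\mathrm{C}_{\rho_{\psi}}^{2})^{2}$ --- is also correct (I verified the cofactor computation) and is arguably the cleaner proof, since it determines the full spectrum without any case split; its only cost is that the identification of $\vec{r}_{1},\vec{r}_{2}$ as the eigenvectors for the eigenvalue $1$ still requires the argument from (\ref{9}) when the state is not maximally entangled.
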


\begin{proof}
Relations (\ref{9}) imply
\begin{equation}
\vec{r}_{2}=T_{\rho_{\psi}}^{\dagger}T_{\rho_{\psi}}%
\vec{r}_{2},\text{ \ \ }\vec{r}_{1}=T_{\rho_{\psi}%
}T_{\rho_{\psi}}^{\dagger}\vec{r}_{1}. \label{21}%
\end{equation}
Therefore, the positive operators $T_{\rho_{\psi}}^{\dagger}T_{\rho_{\psi}}$
and $T_{\rho_{\psi}}T_{\rho_{\psi}}^{\dagger}$ have the eigenvalue  $1$  and the Bloch vectors 
$\vec{r}_{2}$ and $\vec{r}_{1}$ 
are their eigenvectors   corresponding to this eigenvalue. Thus, one of the
singular values of $T_{\rho_{\psi}}$ is equal to $1. $ Let it be $u_{1}%
(\rho_{\psi})=1.$ Taking this into account in the equality $\left\vert \det
T_{\rho_{\psi}}\right\vert =u_{1}(\rho_{\psi})u_{2}(\rho_{\psi})u_{3}%
(\rho_{\psi})$, we have
\begin{equation}
\left\vert\det T_{\rho_{\psi}}\right\vert =u_{2}(\rho_{\psi}%
)u_{3}(\rho_{\psi}). \label{22}%
\end{equation}
This and relation (\ref{13}) imply
\begin{equation}
\mathrm{C}_{\rho_{\psi}}^{2}=u_{2}(\rho_{\psi})u_{3}(\rho_{\psi}). \label{23}%
\end{equation}
From relations (\ref{12}), (\ref{23}) and $u_{1}=1$ it follows%
\begin{align}
1+u_{2}^{2}+u_{3}^{2}  &  =1+2u_{2}u_{3}\label{24}\\
&  \Leftrightarrow\nonumber\\
\left(  u_{2}-u_{3}\right)  ^{2}  &  =0\text{ \ \ \ }\Leftrightarrow\text{
\ \ }u_{2}=u_{3}.\nonumber
\end{align}
By (\ref{23}), this implies 
\begin{equation}
u_{2}(\rho_{\psi})=u_{3}(\rho_{\psi})=\mathrm{C}_{\rho_{\psi}}\leq1. \label{25}
\end{equation} This completes the proof of the statement.
\end{proof}
\\

Relations (\ref{20})  and (\ref{2}) imply.

\begin{corollary}

 For any pure two-qubit state $\rho_{\psi} $,  the state parameter $M_{chsh}(\rho_{\psi})$, defining in (\ref{2}) the maximal violation of the CHSH inequality for traceless qubit observables, is equal to 
 
\begin{equation}
M_{chsh}(\rho_{\psi})=\sqrt{1+\mathrm{C}_{\rho_{\psi}}^{2}}. \label{30}%
 \end{equation}
 
 \end{corollary}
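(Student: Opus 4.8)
The plan is to read off the corollary directly from Theorem 2 combined with Horodeckis' formula (\ref{2}); the substantive content has already been established, so this final step is essentially a matter of correctly matching the ordering conventions. First I would recall that (\ref{2}) expresses the maximal CHSH violation for traceless qubit observables as $2M_{chsh}(\rho)$ with $M_{chsh}(\rho)=\sqrt{u_{1}^{2}(\rho)+u_{2}^{2}(\rho)}$, where $u_{1}(\rho)$ and $u_{2}(\rho)$ are by definition the \emph{two largest} singular values of the correlation matrix $T_{\rho}$. Hence the only thing that needs to be checked is which two of the three singular values of $T_{\rho_{\psi}}$ are the largest.

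By Theorem 2, equations (\ref{20}) and (\ref{25}), the full list of singular values of $T_{\rho_{\psi}}$ is $u_{1}(\rho_{\psi})=1$ and $u_{2}(\rho_{\psi})=u_{3}(\rho_{\psi})=\mathrm{C}_{\rho_{\psi}}$, together with the bound $\mathrm{C}_{\rho_{\psi}}\leq 1$. The second step is therefore immediate: since $1\geq \mathrm{C}_{\rho_{\psi}}=\mathrm{C}_{\rho_{\psi}}$, the largest singular value is $1$ and the second largest is $\mathrm{C}_{\rho_{\psi}}$. Substituting this pair into Horodeckis' formula (\ref{2}) gives
\begin{equation}
M_{chsh}(\rho_{\psi})=\sqrt{u_{1}^{2}(\rho_{\psi})+u_{2}^{2}(\rho_{\psi})}=\sqrt{1^{2}+\mathrm{C}_{\rho_{\psi}}^{2}}=\sqrt{1+\mathrm{C}_{\rho_{\psi}}^{2}},
\end{equation}
which is exactly (\ref{30}).

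I do not expect any genuine obstacle here, because all the difficulty was front-loaded into Theorems 1 and 2, where the explicit spectrum of $T_{\rho_{\psi}}$ was computed for an \emph{arbitrary} pure two-qubit state without any diagonalizability or special-Schmidt-form assumption. The single point that warrants care is the ordering convention: one must explicitly invoke $\mathrm{C}_{\rho_{\psi}}\leq 1$ from (\ref{25}) to guarantee that the pair entering (\ref{2}) is $(1,\mathrm{C}_{\rho_{\psi}})$ and not the pair $(\mathrm{C}_{\rho_{\psi}},\mathrm{C}_{\rho_{\psi}})$ of the two smaller singular values. It is worth emphasizing in closing that, precisely because Theorem 2 holds for every pure two-qubit state, this argument establishes relation (\ref{Mchsh}) in full generality and thereby removes the diagonalizability and special-Schmidt-form restrictions under which it was originally derived in Ref. \cite{020}.
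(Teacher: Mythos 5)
Your proposal is correct and matches the paper's route exactly: the paper derives the corollary as an immediate consequence of Theorem~2 together with Horodeckis' formula (\ref{2}), which is precisely the substitution you carry out. Your explicit check that $\mathrm{C}_{\rho_{\psi}}\leq 1$ (so that the two largest singular values are indeed $1$ and $\mathrm{C}_{\rho_{\psi}}$) is the only detail the paper leaves implicit, and you handle it correctly.
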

 
As mentioned at the beginning of this Section, relation (\ref{30}) was first presented 
in \cite{020} but proved there   only for 
a pure two-qubit state $\rho_{\psi}$ with vector  $|\psi\rangle\in\mathbb{C}^{2}\otimes\mathbb{ C}^{2}$
admitting the Schmidt decomposition of the very special form
$ |\psi\rangle=\sqrt{\lambda_{0}}|00\rangle+\sqrt{\lambda_{1}}|11\rangle\ $  where the correlation matrix (\ref{3}) is diagonal.
However, both conditions do not need to hold for every two-qubit state. 
Take, for example, the
pure two-qubit state, which we   consider in Remark 1 and show that the correlation matrix of this state is not diagonalizable. This state has
projections on both sub-spaces 
 of  $\mathbb{C}^{2}\otimes \mathbb{ C}^{2}$ --
symmetric and antisymmetric, and, therefore, cannot  admit the
Schmidt decomposition of the above special form.

Taking into account relations  (\ref{9__}), (\ref{30}) and that, for a state $\rho_{\psi}$, the maximal violation (\ref{7_})   of all Bell inequalities is not less  than its maximal  violation $M_{chsh}(\rho_{\psi})$ of the CHSH inequality, we come to the following statement.

\begin{theorem}
For every pure state $\rho_{\psi}=|\psi\rangle\langle\psi|$ on $\mathbb{C}%
^{2}\otimes\mathbb{C}^{2}$,  the maximal violation (\ref{7_}) of all Bell inequalities, either on correlation functions or on joint probabilities and with any number of settings and outcomes at each of two sites, admits the
bounds
\begin{equation}
\sqrt{1+\mathrm{C}_{\rho_{\psi}}^{2}}\leq \Upsilon_{\rho_{\psi}}\leq
1+\sqrt{2r_{sch}^{(\psi)}(r_{sch}^{(\psi)}-1)}\mathrm{C}_{\rho_{\psi}}\leq
1+2\mathrm{C}_{\rho_{\psi}},\label{31}%
\end{equation}
where $r_{sch}^{(\psi)}=1,2$ is the Schmidt rank of $|\psi\rangle$ and
$\mathrm{C}_{\rho_{\psi}}=0\Leftrightarrow r_{sch}^{(\psi)}=1.$
\end{theorem}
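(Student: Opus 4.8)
The plan is to obtain all three inequalities in (\ref{31}) by assembling results already established, rather than by any new computation: the lower bound from Corollary 1 combined with the observation that the CHSH functional is one admissible choice in the supremum (\ref{7_}); the middle (upper) bound as the two-qubit specialization of the previously proved estimate (\ref{9__}); and the rightmost bound from the fact that a two-qubit vector has Schmidt rank $1$ or $2$.

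For the lower bound I would argue that, since $\Upsilon_{\rho_\psi}$ in (\ref{7_}) is the supremum of the maximal violation over \emph{all} Bell functionals, and the CHSH inequality (\ref{1_}) is one particular Bell inequality on correlation functions with two binary settings per site, the CHSH violation factor cannot exceed $\Upsilon_{\rho_\psi}$. By (\ref{2}) the maximal CHSH value is $2M_{chsh}(\rho_\psi)$ against the local bound $2$, so this factor is exactly $M_{chsh}(\rho_\psi)$, giving $\Upsilon_{\rho_\psi}\geq M_{chsh}(\rho_\psi)$. Inserting the closed form $M_{chsh}(\rho_\psi)=\sqrt{1+C_{\rho_\psi}^2}$ from Corollary 1 then yields the left inequality $\Upsilon_{\rho_\psi}\geq\sqrt{1+C_{\rho_\psi}^2}$.

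For the upper bound I would invoke (\ref{9__}) with $d=\min\{d_1,d_2\}=2$, which is exactly $\Upsilon_{\rho_\psi}\leq 1+\sqrt{2r_{sch}^{(\psi)}(r_{sch}^{(\psi)}-1)}\,C_{\rho_\psi}$. To reach the rightmost bound, I would note that for $|\psi\rangle\in\mathbb{C}^2\otimes\mathbb{C}^2$ one has $r_{sch}^{(\psi)}\in\{1,2\}$, so the coefficient $\sqrt{2r_{sch}^{(\psi)}(r_{sch}^{(\psi)}-1)}$ equals $0$ when $r_{sch}^{(\psi)}=1$ and $\sqrt{2\cdot 2\cdot 1}=2$ when $r_{sch}^{(\psi)}=2$; in both cases it is at most $2$, giving $\Upsilon_{\rho_\psi}\leq 1+2C_{\rho_\psi}$. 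The remaining equivalence $C_{\rho_\psi}=0\Leftrightarrow r_{sch}^{(\psi)}=1$ I would read off from (\ref{11}): $C_{\rho_\psi}=0$ holds iff $\gamma_{\rho_\psi}=\|\vec{r}_1(\rho_\psi)\|_{\mathbb{R}^3}=1$, i.e. iff each reduced state $\tau_j(\rho_\psi)$ is pure, which for a pure bipartite state means precisely that $|\psi\rangle$ is a product vector of Schmidt rank $1$.

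I do not expect a genuine obstacle here, since the analytic content is carried entirely by Corollary 1 and by the previously established bound (\ref{9__}). The only step requiring care is the inequality $\Upsilon_{\rho_\psi}\geq M_{chsh}(\rho_\psi)$: I must check that the CHSH functional indeed belongs to the class over which the supremum in (\ref{7_}) is taken and that the normalization of $\Upsilon_{\rho_\psi}$ as a violation factor is compatible with the factor $2$ separating $2M_{chsh}(\rho_\psi)$ from the local bound in (\ref{2}). Both points are settled by the definition of $\Upsilon_{\rho_\psi}$ recalled just after (\ref{7_}), so the chain of inequalities in (\ref{31}) closes.
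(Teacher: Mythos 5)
Your proposal is correct and follows essentially the same route as the paper: the authors obtain Theorem 3 by combining $\Upsilon_{\rho_\psi}\geq M_{chsh}(\rho_\psi)$ with Corollary 1 for the lower bound, and specializing the previously derived bound (\ref{9__}) with $r_{sch}^{(\psi)}\in\{1,2\}$ for the upper bounds. Your additional verification that the CHSH functional lies in the class defining (\ref{7_}) and your argument for $C_{\rho_\psi}=0\Leftrightarrow r_{sch}^{(\psi)}=1$ are sound and consistent with the paper's (terser) presentation.
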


Since any quantum state $\rho$ is nonlocal iff $\Upsilon_{\rho}>1$  and entangled iff $\mathrm{C}_{\rho}>0$ (see in Section 2), relation (\ref{31}) explicitly proves  that entanglement of a pure two-qubit state is necessary and sufficient for its Bell nonlocality. 

 \section{General two-qudit case}

In order to find a lower bound on the maximal violation (\ref{7_}) for a pure
state $\rho_{\psi}=|\psi\rangle\langle\psi|$ on $\mathbb{C}^{d_{1}%
}\otimes\mathbb{ C}^{d_{2}},$ we recall the result in Ref. \cite{010}.

Let, for a pure state $|\psi\rangle\in\mathbb{C}^{d_{1}}\otimes\mathbb{C}^{d_{2}}$,
the Schmidt decomposition read
\begin{equation}
|\psi\rangle=\sum_{1\leq k\leq r_{sch}^{(\psi)}}\sqrt{\lambda_{k}(\psi)\text{
}}|e_{k}^{(1)}\rangle\otimes|e_{k}^{(2)}\rangle,\text{ \ \ }\sum_{1\leq k\leq
r_{sch}^{(\psi)}}\lambda_{k}(\psi)=1. \label{32}\ 
\end{equation}
Here: (i) $\lambda_{k_{1}}\geq\lambda_{k_{2}}\geq
...\geq\lambda_{r_{sch}^{(\psi)}}>0$  are the nonzero eigenvalues, taken according
to their multiplicity, of states on $\mathbb{C}^{d_{j}},$ $j=1,2$, reduced from a pure state $|\psi\rangle\langle\psi|$ on
$\mathbb{C}^{d_{1}}\otimes\mathbb{C}^{d_{2}}$; (ii)  $|e_{k}^{(j)}\rangle\in\mathbb{C}^{d_{j}},$ $j=1,2,$ 
are the unit eigenvectors of these reduced states and (iii) $1\leq r_{sch}^{(\psi)}\leq
d=\min\{d_{1},d_{2}\}$ is the Schmidt rank of vector $|\psi\rangle
\in\mathbb{C}^{d_{1}}\otimes\mathbb{C}^{d_{2}}.$

Then, for observables $\widetilde{A}_{i},$ $\widetilde{B}_{j},$ $i,j=1,2,$
chosen in \cite{010} specifically for a state (\ref{32}), the CHSH
combination of the product expectations is equal to \cite{010}:%
\begin{align}
&  \left\vert \text{ }\langle\widetilde{A}_{1}\otimes\widetilde{B}_{1}%
\rangle_{\rho_{\psi}}+\langle\widetilde{A}_{1}\otimes\widetilde{B}_{2}%
\rangle_{\rho_{\psi}}+\langle\widetilde{A}_{2}\otimes\widetilde{B}_{1}%
\rangle_{\rho_{\psi}}-\langle\widetilde{A}_{2}\otimes\widetilde{B}_{2}%
\rangle_{\rho_{\psi}}\right\vert \label{33}\\
&  =2\left\{  \beta(\psi)+\sqrt{\left(  \text{ }1-\beta(\psi)\right)
^{2}+K^{2}(\psi)}\right\}  \nonumber
\end{align}
where $\beta(\psi)=\lambda_{r_{sch}^{(\psi)}}$  if the Schmidt rank
$r_{sch}^{(\psi)}\ $is odd  and $\beta(\psi)=0$ if $r_{sch}^{(\psi)}$ is even.
Parameter $K(\psi)=0$ in a separable case $(r_{sch}^{(\psi)}=1)$ and %

\begin{equation}
K(\psi)=2\left(  \sqrt{\lambda_{1}\lambda_{2}}+\sqrt{\lambda_{3}\lambda_{4}%
}+\sqrt{\lambda_{5}\lambda_{6}}+\cdots\right)  \label{35}%
\end{equation}
in a nonseparable case $(r_{sch}^{(\psi)}\geq2).$ The last term in
(\ref{35}) is equal to $2\sqrt{\lambda_{r_{sch}^{(\psi)}-1}\lambda
_{r_{sch}^{(\psi)}}}$ if $r_{sch}^{(\psi)}$ is even and to $2\sqrt
{\lambda_{r_{sch}^{(\psi)}-2}\lambda_{r_{sch}^{(\psi)}-1}}$ if $r_{sch}%
^{(\psi)}$ is odd. 

By results (\ref{33}), (\ref{35}), found in \cite{010}, and the
relation
\begin{equation}
\gamma+\sqrt{\left(  \text{ }1-\gamma\right)  ^{2}+K^{2}}\geq
\sqrt{1+K^{2}},\ \label{37}%
\end{equation}
valid for any non-negative $0\leq\gamma\leq1,$ we derive that, for any pure two-qudit
state $\rho_{\psi}=|\psi\rangle\langle\psi|$ on $\mathbb{C}^{d_{1}%
}\otimes\mathbb{ C}^{d_{2}},$ its maximal violation of the CHSH inequality 
\begin{equation}
\Upsilon_{chsh}^{(\rho_ {\psi})} \geq\sqrt{1+K^{2}(\psi)}\ . \label{38}%
\end{equation}
This inequality and (\ref{7_}) 
 imply the following new result.

\begin{theorem}
Let $\rho_{\psi}=|\psi\rangle\langle\psi|,$ $|\psi\rangle\in$ $\mathbb{C}%
^{d_{1}}\otimes\mathbb{ C}^{d_{2}}$ be an arbitrary pure state with the
Schmidt rank $1\leq r_{sch}^{(\psi)}\leq d=\min\{d_{1},d_{2}\}$ and
$\Upsilon_{\rho_{\psi}}$ be the maximal violation by state $\rho_{\psi}$ of all bipartite Bell inequalities \ref{7_}, either on correlation functions or on joint
probabilities and with any number of settings and outcomes at each of two
sites. Then $\Upsilon_{\rho_{\psi}}=1$ for $r_{sch}^{(\psi)}=1$ and 
\begin{equation}
\Upsilon_{\rho_{\psi}}\geq\sqrt{1+\frac{1}%
{(2r_{sch}^{(\psi)}-3)^{2}}\mathrm{C}_{\rho_{\psi}}^{2}}\label{39_}%
\end{equation}
for all  $r_{sch}^{(\psi)}\geq2$. Here,  $\mathrm{C}_{\rho_{\psi}}$ is concurrence (\ref{5_}) of state $\rho_{\psi}$.

\end{theorem}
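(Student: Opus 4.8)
The plan is to reduce (\ref{39_}), via the Gisin--Peres CHSH construction already quantified in (\ref{38}), to a purely algebraic inequality between the parameter $K(\psi)$ of (\ref{35}) and the concurrence $C_{\rho_\psi}$, both expressed through the Schmidt coefficients $\lambda_1\ge\cdots\ge\lambda_{r}>0$ (write $r:=r_{sch}^{(\psi)}$). Since the CHSH functional is one admissible Bell functional in the supremum (\ref{7_}), we have $\Upsilon_{\rho_\psi}\ge\Upsilon_{chsh}^{(\rho_\psi)}$, so (\ref{38}) already gives $\Upsilon_{\rho_\psi}\ge\sqrt{1+K^2(\psi)}$ for every pure two-qudit state. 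The case $r=1$ is immediate: a pure product state is separable, hence fully Bell local with $\Upsilon_{\rho_\psi}=1$, while $C_{\rho_\psi}=0$, so (\ref{39_}) holds as an equality. It therefore suffices to establish, for $r\ge2$, the single estimate $K^2(\psi)\ge\frac{1}{(2r-3)^2}C_{\rho_\psi}^2$, after which (\ref{38}) and monotonicity of $\sqrt{1+x}$ conclude the proof.

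Next I would rewrite both sides through the $\lambda_k$. For a pure state the reduced state $\tau_j(\rho_\psi)$ has eigenvalues $\lambda_1,\dots,\lambda_r$, so $\mathrm{tr}[\tau_j^2]=\sum_k\lambda_k^2$, and using $\sum_k\lambda_k=1$ the definition (\ref{5_}) gives
\[
C_{\rho_\psi}^2=2\Bigl(1-\sum_{k}\lambda_k^2\Bigr)=4\sum_{1\le k<l\le r}\lambda_k\lambda_l .
\]
On the other side, from (\ref{35}) the very first pair already yields $K(\psi)\ge 2\sqrt{\lambda_1\lambda_2}$, hence $\bigl(K(\psi)/2\bigr)^2\ge\lambda_1\lambda_2$.

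The core estimate is then elementary. Because the $\lambda_k$ are ordered, $\lambda_1\lambda_2$ is the largest of all pairwise products, so $\lambda_k\lambda_l\le\lambda_1\lambda_2$ for every $k<l$ and
\[
\sum_{1\le k<l\le r}\lambda_k\lambda_l\le\binom{r}{2}\lambda_1\lambda_2=\frac{r(r-1)}{2}\,\lambda_1\lambda_2\le\frac{r(r-1)}{2}\Bigl(\frac{K(\psi)}{2}\Bigr)^2 .
\]
Combining this with the formula for $C_{\rho_\psi}^2$ gives the clean bound $C_{\rho_\psi}^2\le\frac{r(r-1)}{2}\,K^2(\psi)$, i.e. $\Upsilon_{\rho_\psi}\ge\sqrt{1+\tfrac{2}{r(r-1)}C_{\rho_\psi}^2}$. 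The stated form (\ref{39_}) then follows from the purely numerical fact that $\frac{r(r-1)}{2}\le(2r-3)^2$ for all $r\ge2$ (equivalently $7r^2-23r+18=7(r-2)(r-\tfrac{9}{7})\ge0$), with equality at $r=2$, which also recovers exactly the two-qubit bound of Theorem 4. The only place demanding care is this last comparison: keeping just the leading term of $K(\psi)$ and bounding all $\binom{r}{2}$ products by $\lambda_1\lambda_2$ looks wasteful, so the main thing to verify is that it is nonetheless sufficient — indeed it produces the constant $\tfrac{2}{r(r-1)}$, which dominates $\tfrac{1}{(2r-3)^2}$, so no finer treatment of the remaining pairs in (\ref{35}) is required.
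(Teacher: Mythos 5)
Your proof is correct, and it takes a genuinely different (and in fact more economical) route than the paper's. The paper keeps the full sum $K(\psi)=2(\sqrt{\lambda_1\lambda_2}+\sqrt{\lambda_3\lambda_4}+\cdots)$ and, through a counting argument on inequalities of the type $\sqrt{\lambda_1\lambda_2}\ge\sqrt{\lambda_i\lambda_k}$, establishes $K(\psi)\ge\frac{2}{2r-3}\sum_{i<j}\sqrt{\lambda_i\lambda_j}$, which is then combined with $C_{\rho_\psi}=2\bigl(\sum_{i<j}\lambda_i\lambda_j\bigr)^{1/2}\le 2\sum_{i<j}\sqrt{\lambda_i\lambda_j}$ to get $K(\psi)\ge\frac{1}{2r-3}C_{\rho_\psi}$. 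You instead discard all but the first term of $K(\psi)$ and bound the squared concurrence directly by $C_{\rho_\psi}^2=4\sum_{k<l}\lambda_k\lambda_l\le 2r(r-1)\lambda_1\lambda_2\le\frac{r(r-1)}{2}K^2(\psi)$, using only that $\lambda_1\lambda_2$ dominates every pairwise product; each step here is sound. Notably, your "wasteful-looking" estimate actually yields the stronger intermediate bound $K^2(\psi)\ge\frac{2}{r(r-1)}C_{\rho_\psi}^2$, and since $2(2r-3)^2-r(r-1)=(r-2)(7r-9)\ge0$ for $r\ge2$, your constant $\frac{2}{r(r-1)}$ dominates the paper's $\frac{1}{(2r-3)^2}$ (strictly for $r\ge3$, asymptotically by a factor of $8$), so you recover the stated theorem by deliberately weakening a sharper inequality; at $r=2$ the two coincide. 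What the paper's finer treatment of the remaining pairs in $K(\psi)$ buys is unclear in light of this, since its final constant is weaker; what your argument buys is brevity and a better bound, at the cost of not exhibiting the structure of $K(\psi)$ beyond its leading term.
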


\begin{proof}
Clearly, if $r_{sch}^{(\psi)}=1$,  then $\Upsilon_{\rho_{\psi}}=1$, also, $K(\psi
)=\mathrm{C}_{\rho_{\psi}}=0$. In order to prove the lower
bound in (\ref{39_}) for every pure state $\rho_{\psi}=|\psi\rangle\langle\psi|$ on $\mathbb{C}^{d_{1}}\otimes\mathbb{C}^{d_{2}}$, let us evaluate via the concurrence the term $K(\psi)$,
standing in (\ref{38}). If  $r_{sch}^{(\psi
)}\geq2,$ then by relations  (\ref{5_}) and (\ref{32}) we have
\begin{equation}
\mathrm{C}_{\rho_{\psi}}=\sqrt{2\left(  1-\text{\textrm{tr}}[\tau_{j}^{2}(\rho_{\psi
})]\right)  }=2\sqrt{%
{\textstyle\sum\limits_{i<j}}
\lambda_{i}\lambda_{j}}\leq2%
{\textstyle\sum\limits_{i<j}}
\sqrt{\lambda_{i}\lambda_{j}},\label{41}%
\end{equation}
where the equality holds for $r_{sch}%
^{(\psi)}=2$ and, as settled above,  $\lambda_{k_{1}}\geq\lambda_{k_{2}}%
\geq...\geq\lambda_{r_{sch}^{(\psi)}}>0.$ For evaluating the terms in
expression (\ref{35}) defining parameter $K(\psi)$, we note that, for $r_{sch}^{(\psi
)}\geq2,$ there are $(r_{sch}^{(\psi)}-1)$ inequalities of the form
$\sqrt{\lambda_{1}\lambda_{2}}\geq\sqrt{\lambda_{1}\lambda_{k}}$,
$k=2,\dots,r_{sch}^{(\psi)}$, and $(r_{sch}^{(\psi)}-2)$ inequalities of the form $\sqrt{\lambda_{1}\lambda_{2}}\geq\sqrt{\lambda_{2}\lambda_{k}}$,
$k=3,\dots,r_{sch}^{(\psi)}$. Therefore, for any $r_{sch}^{(\psi)}\geq2,$%
\begin{equation}
(2r_{sch}^{(\psi)}-3)\sqrt{\lambda_{1}\lambda_{2}}\geq\sum_{k=2}%
^{r_{sch}^{(\psi)}}\sqrt{\lambda_{1}\lambda_{k}}+\theta(r_{sch}^{(\psi
)}-3)\sum_{k=3}^{r_{sch}^{(\psi)}}\sqrt{\lambda_{2}\lambda_{k}},
\label{42}%
\end{equation}
where $\theta(x)=1$,  for $x\geq0$, and $\theta(x)=0$, otherwise.
Similarly,
for any  $r_{sch}^{(\psi)}\geq4,$
\begin{equation}
(2r_{sch}^{(\psi)}-7)\sqrt{\lambda_{3}\lambda_{4}}\geq\sum_{k=4}%
^{r_{sch}^{(\psi)}}\sqrt{\lambda_{3}\lambda_{k}}+\theta(r_{sch}^{(\psi
)}-5)\sum_{k=5}^{r_{sch}^{(\psi)}}\sqrt{\lambda_{4}\lambda_{k}}\label{43}%
\end{equation}
and, for  any $r_{sch}^{(\psi)}\geq6,$%
\begin{equation}
(2r_{sch}^{(\psi)}-11)\sqrt{\lambda_{5}\lambda_{6}}\geq\sum_{k=6}%
^{r_{sch}^{(\psi)}}\sqrt{\lambda_{5}\lambda_{k}}+\theta(r_{sch}^{(\psi
)}-7)\sum_{k=7}^{r_{sch}^{(\psi)}}\sqrt{\lambda_{6}\lambda_{k}}\label{eq},%
\end{equation}
with the corresponding analogs for other terms for $K(\psi)$ in (\ref{35}).

These inequalities and (\ref{35})
imply that, for  any $r_{sch}^{(\psi)}\geq2,$
\begin{align}
K(\psi)  & \geq\frac{2}{2r_{sch}^{(\psi)}-3}\left(
\sum_{k=2}^{r_{sch}^{(\psi)}}\sqrt{\lambda_{1}\lambda_{k}}+\theta
(r_{sch}^{(\psi)}-3)\sum_{k=3}^{r_{sch}^{(\psi)}}\sqrt{\lambda_{2}\lambda_{k}%
}\right)  \label{45}\\
& +\frac{2\theta(r_{sch}^{(\psi)}-4)}{2r_{sch}^{(\psi)}-7}\left(  \sum
_{k=4}^{r_{sch}^{(\psi)}}\sqrt{\lambda_{3}\lambda_{k}}+\theta(r_{sch}^{(\psi
)}-5)\sum_{k=5}^{r_{sch}^{(\psi)}}\sqrt{\lambda_{4}\lambda_{k}}\right)
+\cdots.\nonumber
\end{align}
Therefore,  for $r_{sch}^{(\psi)}%
\geq2,$%
\begin{equation}
K(\psi)\geq\frac{2}{2r_{sch}^{(\psi)}-3}\sum_{1\leq i<j\leq r_{sch}^{(\psi)}%
}\sqrt{\lambda_{i}\lambda_{j}}.\label{46}%
\end{equation}
In view of (\ref{41}), this implies, that, for all $r_{sch}^{(\psi)}\geq2,$
\begin{equation}
K(\psi)\geq\frac{1}{2r_{sch}^{(\psi)}-3}\mathrm{C}_{\rho_{\psi}}.\label{47}
\end{equation} 
Eqs. (\ref{38}), (\ref{47}) and relation $\Upsilon_{\rho_ {\psi}}
\geq\Upsilon_{chsh}^{(\rho_ {\psi})}$ prove the lower bound in (\ref{39_}). 
\end{proof}
\\

The lower bound  in (\ref{39_}) quantifies Bell nonlocality of a pure two-qudit state in terms of its entanglement and has no analogues in the literature.
This new bound explicitly proves 
the statement of Gisin and Peres \cite{09,010} that every entangled two-qudit state is Bell nonlocal.

The new lower bound in Theorem 4 and the upper bound (\ref{9__}), derived in \cite{019}, imply.

\begin{theorem}
    
For any pure two-qudit state $\rho_{\psi}=|\psi\rangle\langle\psi|$ 
 on   
$\mathbb{C}^{d_{1}}\otimes\mathbb{C}^{d_{2}}$,  the  maximal violation of all Bell inequalities, either on correlation functions or on joint probabilities and with any number of settings and outcomes at each of two
sites, admits the bounds
\begin{equation}
\sqrt{1+\frac{1}{(2r_{sch}^{(\psi)}-3)^{2}}%
\mathrm{C}_{\rho_{\psi}}^{2}}\leq\Upsilon_{\rho_{\psi}}\leq1+\sqrt{2r_{sch}^{(\psi)}(r_{sch}^{(\psi)}%
-1)}\mathrm{C}_{\rho_{\psi}},\label{39__}\
\end{equation}
where $1\leq r_{sch}^{(\psi)}\leq d=\min\{d_{1},d_{2}\}$ is the Schmidt rank of $|\psi\rangle$,  $\mathrm{C}_{\rho_{\psi}}$ is concurrence (\ref{5_}) of $\rho_{\psi}$ and
$\mathrm{C}_{\rho_{\psi}}=0\Leftrightarrow r_{sch}^{(\psi)}=1.$

\end{theorem}

From the upper bound and the lower bound it follows, respectively: $\Upsilon_{\rho}>1 \Rightarrow $  $\mathrm{C}_{\rho_{\psi}} > 0$  and $\mathrm{C}_{\rho_{\psi}} > 0 \Rightarrow \Upsilon_{\rho}>1$.

Therefore, since a pure two-qudit state $\rho$ is nonlocal iff  $\Upsilon_{\rho}>1$ 
 and entangled iff  $\mathrm{C}_{\rho_{\psi}} > 0$ (see in  Section 2), the upper bound and the lower bounds in (\ref{39__}) explicitly prove that entanglement of a pure two-qubit state is necessary and sufficient for its Bell nonlocality. For $d=2$, bounds in (\ref{39__}) reduce to those in (\ref{31}).

\section{Conclusion}

In the present article, for the maximal violation of all Bell inequalities by an arbitrary pure two-qudit state, we have found a new lower bound (Theorem 4), expressed via the concurrence of this state.  This new lower bound and the upper bound (\ref{9__}),  derived in \cite{019},  analytically quantify  Bell nonlocality of a pure two-qudit state via its entanglement (Theorem 5), in particular, prove explicitly that entanglement of a pure two-qudit state is necessary and sufficient for its Bell nonlocality. 

By
re-visiting a pure two-qubit case, we have also found and rigorously proved  (Theorems 1--3, Corollary 1) the new results on the  correlation properties and Bell nonlocality of an arbitrary  pure two-qubit state.

\section*{Acknowledgment}

The study was implemented in the framework of the Basic Research Program at
the National Research University Higher School of Economics (HSE University)
in 2023.


\begin{thebibliography}{99}        

\bibitem {01} Bell J S 1964 \emph{Phys. Phys. Fiz}. \textbf{1} 195

\bibitem {02}  Bell J S 1966 \emph{Rev. Mod. Phys.} \textbf{38} 447 

\bibitem {04}Loubenets E R 2008 \emph{J. Phys. A: Math. Theor.} \textbf{41} 445304

\bibitem {05}Loubenets E R 2012 \emph{J. Math. Phys.} \textbf{53} 022201

\bibitem {06}Brunner N, Cavalcanti D, Pironio S, Scarani V and Wehner S 2014
\emph{Rev. Mod. Phys}. \textbf{86} 419

\bibitem {07}Loubenets E R 2017 \emph{Found. Phys}. \textbf{47} 1100--1114

\bibitem {08} Werner R F 1989 \emph{Phys. Rev. A} \textbf{40} 4277

\bibitem {09}Gisin N 1991 \emph{Phys. Lett. A} \textbf{154} 201--202 

\bibitem {010}Gisin N and Peres A 1992 \emph{Phys. Lett. A }\textbf{162} 15

\bibitem {011}Clauser J F, Horne M A, Shimony A and Holt R A  1969 \emph{Phys.
Rev. Lett.} \textbf{23} 880--884

\bibitem {012}Horodecki R, Horodecki P, Horodecki M 1995 \emph{Phys. Lett. A
}\textbf{200} 340--344

\bibitem {019}Loubenets E R and Namkung M 2022 \emph{J. Phys. A: Math. Theor. }\textbf{55} 285301




\bibitem {014}Hill S A and Wootters W K  1997 \emph{Phys. Rev. Lett.
}\textbf{78} 5022--5025

\bibitem {014_}Rungta P, Buzek V, Caves C M, Hillery M and Milburn G J  2001 \emph{Phys. Rev. A} \textbf{64} 042315

\bibitem {015}Chen K, Albeverio S and Fei S M 2005 \emph{Phys. Rev Lett}.
\textbf{95} 040504

\bibitem {016}Kim J S, Das A and Sanders B C  2009 \emph{Phys. Rev. A}
\textbf{79} 012329




\bibitem {nlm5} Luo S  and Fu S 2011 \emph{Phys. Rev. Lett.}
\textbf{12} 120401

\bibitem {nlm2}De Vicente J I  2014 \emph{J. Phys. A: Math. Theor}
\textbf{47} 424017

\bibitem {nlm3} Fonseca E A and Parisio F 2015, \emph{Phys. Rev. A}
\textbf{92}, 030101.

\bibitem {nlm4} Hu M L and Fan H 2015, \emph{New J. Phys.}
\textbf{17}, 033004.

\bibitem {nlm1} Barasi{\'n}ski A and Nowotarski M 2018 \emph{Phys. Rev. A}
\textbf{98} 022132



\bibitem {018}Loubenets E R 2017 \emph{J. Math. Phys.} \textbf{58} 052202

\bibitem {CH}Clauser J F and Horne M A 1974 \emph{Phys. Rev. D} \textbf{10}
526

\bibitem {Moroder2013} Moroder T, Bancal J D, Liang Y C, Hofmann M and Guhne O 2013 \emph{Phys. Rev. Lett.} \textbf{111} 030501

\bibitem {Liang2011} Liang Y C, V{\'e}rtesi T and Brunner N 2011 
\emph{Phys. Rev. A}. \textbf{83}
022108

\bibitem {app}Goh K T, Bancal J D and Sacarani V 2016 \emph{New J. Phys.} \textbf{18}
045022

\bibitem {020}Verstraete F and Wolf M M 2002 \emph{Phys. Rev. Lett.}
\textbf{89} 170401

\bibitem {Batle}Batle J and Casas M 2011 \emph{J. Phys. A: Math. Theor}. \textbf{44}
445304

\bibitem {Gamel} Gamel O 2016 \emph{Phys. Rev. A}. \textbf{93}
062320 

\bibitem {21}Loubenets E R and Kulakov M 2021 \emph{J. Phys. A: Math. Theor.} \textbf{54}
195301




\end{thebibliography}
\end{document}